\newtheorem{prop}{Proposition}[section]
\title{Ge\-o\-me\-try a\-nd phy\-si\-cs
of pse\-udo\-dif\-fe\-ren\-ti\-al o\-pe\-ra\-to\-rs 
on ma\-ni\-fol\-ds}
\author[1]{Giampiero Esposito}
\author[2]{George M. Napolitano}
\affil[1]{INFN, Sezione di Napoli, Complesso Universitario di Monte S. Angelo,
Via Cintia Edificio 6, 80126 Napoli, Italy}
\affil[2]{Centre for Mathematical Sciences, Lund University, SE-22100 Lund, Sweden}
\begin{document}

\maketitle

\begin{abstract}
A review is made of the basic tools used in mathematics to define a calculus for
pseudodifferential operators on Riemannian manifolds endowed with a connection: 
existence theorem for the function that generalizes the phase; analogue of Taylor's 
theorem; torsion and curvature terms in the symbolic calculus; the two kinds of derivative
acting on smooth sections of the cotangent bundle of the Riemannian manifold; the concept
of symbol as an equivalence class. Physical motivations and applications 
are then outlined, with emphasis on Green functions of quantum field theory and
Parker's evaluation of Hawking radiation.
\end{abstract}

\section{Introduction}
\label{sec:intro}

In the course of studying partial differential equations on $\mathbb{R}^{n}$, one discovers one can
consider operators whose action is defined, at least formally, by the integral 
\begin{equation}
(Pu)(x) \equiv (2 \pi)^{-{\frac{n}{2}}} \int_{\mathbb{R}^{n}}
{\rm e}^{{\rm i}\xi \cdot x}p(x,\xi){\hat u}(\xi){\rm d}\xi,
\label{(1)}
\end{equation}
where ${\rm d}\xi$ is the Lebesgue measure on $\mathbb{R}^{n}$, $p(x,\xi)$ is the {\it amplitude}
of the operator $P$, $\xi \cdot x=\langle x,\xi \rangle$ is its {\it phase} function, and
\begin{equation}
{\hat u}(\xi)=(2 \pi)^{-{\frac{n}{2}}}\int_{\mathbb{R}^{n}}
{\rm e}^{-{\rm i}\xi \cdot x}u(x){\rm d}x
\label{(2)}
\end{equation}
is the Fourier transform of $u$. These operators are said to be pseudodifferential and form a
class large enough to contain the differential operators, the Green operators and the singular
integral operators that are used to solve partial differential equations. They also contain,
for each elliptic operator, its parametrix, i.e. an approximate inverse up to an operator of
lower order (see below).

To understand how pseudo-differential operators can be used in solving inhomogeneous partial 
differential equations, let us begin by considering, for $n  \geq 3$ and a smooth function 
$f$ with compact support, i.e. $f \in C_0^{\infty}(\mathbb{R}^n)$,   
the inhomogeneous equation $\bigtriangleup u=f$, where
$$
\bigtriangleup \equiv \sum_{k=1}^{n}{\frac{\partial^{2}}{\partial x_{k}^{2}}}
$$
is minus the Laplacian (with our convention, $\bigtriangleup$ is defined in the standard way, but
the Laplacian has symbol given by $\sum_{k=1}^{n}(\xi_{k})^{2}=|\xi|^{2}$). Upon taking the Fourier 
transform of both sides, one finds
\begin{equation}
-|\xi|^{2}{\hat u}(\xi)={\hat f}(\xi),
\label{(3)}
\end{equation}
and hence, for the inverse operator $Q$ of $\bigtriangleup$, or fundamental solution, we 
can write \cite{ref:boo,ref:grubb}
\begin{equation}
\begin{split}
u(x)=(Qf)(x) & =-(2\pi)^{-\frac{n}{2}}\int_{\mathbb{R}^n}
{\rm e}^{{\rm i}\xi \cdot x} \frac{{\hat f}(\xi)}{|\xi|^{2}} {\rm d}\xi\\
& = - \frac{\Gamma\left(\frac{n}{2}-1\right)}{4\pi^{\frac{n}{2}}} \int_{\mathbb{R}^n} \frac{f(y)}{|x-y|^{n-2}} {\rm d}y
\end{split}
\label{(4)}
\end{equation}

If, instead of the Laplacian on $\mathbb{R}^n$, we deal with a general partial differential operator
with constant coefficients which can be written as a polynomial $P=p(D)$, where
$D \equiv -{\rm i}\left({\frac{\partial}{\partial x_{1}}},...,{\frac{\partial}{\partial x_{n}}} \right)$, so that
\begin{equation}
p(D)u=f \in C_{0}^{\infty}(\mathbb{R}^n),
\label{(5)}
\end{equation}
the solution can be formally expressed as
\begin{equation}
u(x)=(Qf)(x)=(2\pi)^{-{\frac{n}{2}}}\int_{\mathbb{R}^n}
{\rm e}^{{\rm i}\xi \cdot x}q(\xi){\hat f}(\xi){\rm d}\xi,
\label{(6)}
\end{equation}
where the amplitude $q(\xi)$ is the inverse of the symbol $p(\xi)$ of $p(D)$, and the integration
contour must avoid the zeros of $p(\xi)$. However, if the operator $P$ has variable coefficients
on a subset $U$ of $\mathbb{R}^n$, i.e.
\begin{equation}
P=p(x,D)=\sum_{\alpha: |\alpha| \leq k} a_{\alpha}(x)D_{x}^{\alpha}, \;
a_{\alpha} \in C^{\infty}(U),
\label{(7)}
\end{equation}
one can no longer solve the equation $Pu=f$ by Fourier transform. One can however {\it freeze}
the coefficients at a point $x_{0} \in U$ and consider $P$ as a perturbation of $p(x_{0},D)$, which is
hence a differential operator with constant coefficients. In this way the amplitude of $P$ reduces
to $q(\xi)=\frac{1}{p_{0}(x_{0},\xi)} $, and if we let $x_{0}$ vary in $U$, we obtain the approximate
solution operator
\begin{equation}
(Qf)(x)=(2\pi)^{-{\frac{n}{2}}}\int_{\mathbb{R}^n}
{\rm e}^{{\rm i}\xi \cdot x}q(x,\xi){\hat f}(\xi){\rm d}\xi
\label{(8)}
\end{equation}
with amplitude $q(x,\xi)=p(x,\xi)^{-1}$, for $x \in U, f \in C_{0}^{\infty}(\mathbb{R}^n)$.

However, to solve the inhomogeneous equation $Pu=f$, we do not strictly need the full inverse operator
or fundamental solution, but, as we said before, it is enough to know a parametrix, i.e. a quasi-inverse
modulo a regularizing operator. One can provide a first example of parametrix by reverting to the study
of constant coefficient operators. 
A parametrix $Q$ can then be constructed by choosing its amplitude \cite{ref:boo} as
\begin{equation}
q(\xi)=\frac{\chi(\xi)}{p(\xi)},
\label{(9)}
\end{equation}
where $\chi(\xi)$ is a suitably chosen $C^{\infty}(\mathbb{R}^n)$ function which is identically zero in a disk about the
origin and identically $1$ for large $\xi$. In this way the integral formula
\begin{equation}
(Qf)(x)=(2\pi)^{-{\frac{n}{2}}}\int_{\mathbb{R}^n} {\rm e}^{{\rm i}\xi \cdot x} q(\xi){\hat f}(\xi){\rm d}\xi,
\label{(10)}
\end{equation}
is not affected by the convergence problems that would be met if the amplitude were taken
to be just $p(\xi)^{-1}$. If $Q$ is the integral operator in (\ref{(10)}) it is no longer true that
$PQ=QP=I$, but we have \cite{ref:boo}
\begin{equation}
PQf=f+Rf, \; f \in C^{\infty}(U),
\label{(11)}
\end{equation}
\begin{equation}
(Rf)(x) \equiv (2\pi)^{-{\frac{n}{2}}}\int_{\mathbb{R}^n}r(x-\xi)f(\xi){\rm d}\xi,
\label{(12)}
\end{equation}
where the Fourier transform of $r$ is $\chi-1$. Thus $r$ is a smooth function and $R$ turns out to be
a smoothing operator. Such a class of smoothing operators is fully under control, and hence a
parametrix $Q$ serves just as well as a full fundamental solution, for which $R$ vanishes identically.

In the following we will see how to extend the theory of pseudodifferential operators on $\mathbb{R}^n$ to 
more general pseudodifferential operators defined on compact manifolds. In particular, we will outline some 
basic symbolic calculus for such operators. In the last two sections of this paper, some applications to physics will be shown. 

\section{Pseudodifferential operators on manifolds} 

First, note that the material in the appendix can be re-expressed by saying that a pseudodifferential 
operator $A$ acting on functions in $C_0^\infty(\mathbb{R}^n)$ has a
symbol given by
\begin{equation}
\sigma_{A}(x,\xi)={\rm e}^{-{\rm i}\xi \cdot x} A {\rm e}^{{\rm i}\xi \cdot x}
\iff \sigma_{A}(x_{0},\xi)
= \left. \Bigr[A {\rm e}^{{\rm i}\xi \cdot (x-x_{0})}\Bigr] \right|_{x=x_{0}},
\label{(13)}
\end{equation}
and hence formal application of Taylor's formula yields
\begin{equation}
{\rm e}^{-{\rm i}\xi \cdot x}A f(x) {\rm e}^{{\rm i}\xi \cdot x}
=\sum_{k_{1} \dots k_{n}} \frac{{\rm i}^{-(k_{1}+ \cdots +k_{n})}}{k_{1}! \dots k_{n}!}
\left({\partial^{k_{1}}\over \partial \xi_{1}^{k_{1}}} \cdots
 {\partial^{k_{n}}\sigma_{A}\over \partial \xi_{n}^{k_{n}}}
 \right)
 \left({\partial^{k_{1}}\over \partial x_{1}^{k_{1}}} \cdots
 {\partial^{k_{n}}f\over \partial x_{n}^{k_{n}}} \right),
\label{(14)}
\end{equation}
the sum being taken over all values of the multi-index $k=(k_{1},...,k_{n})$. The symbol of the
product $AB$ of two pseudodifferential operators $A$ and $B$ is then defined by
\begin{equation}
\begin{split}
\sigma_{AB}(x,\xi)& = {\rm e}^{-{\rm i}\xi \cdot x} AB {\rm e}^{{\rm i}\xi \cdot x}
={\rm e}^{-{\rm i}\xi \cdot x}A {\rm e}^{{\rm i}\xi \cdot x}
{\rm e}^{-{\rm i}\xi \cdot x}B {\rm e}^{{\rm i}\xi \cdot x} \\
&= {\rm e}^{-{\rm i}\xi \cdot x}A {\rm e}^{{\rm i}\xi \cdot x}\sigma_{B} \\
&= \sum_{k_{1} \dots k_{n}} \frac{{\rm i}^{-(k_{1}+ \cdots +k_{n})}}{k_{1}! \dots k_{n}!}
\left({\partial^{k_{1}}\over \partial \xi_{1}^{k_{1}}} \cdots
 {\partial^{k_{n}}\sigma_{A}\over \partial \xi_{n}^{k_{n}}}
 \right)
 \left({\partial^{k_{1}}\over \partial x_{1}^{k_{1}}} \cdots
 {\partial^{k_{n}}\sigma_{B}\over \partial x_{n}^{k_{n}}} \right)
\end{split}
\label{(15)}
\end{equation}
	
Note here the crucial role played by the function
\begin{equation}
l(x_{0},\xi,x)=\xi \cdot (x-x_{0})=\sum_{l=1}^{n}\xi_{l}(x^{l}-x_{0}^{l})
\label{(16)}
\end{equation}
which, for each $x_{0}$, is linear in $x$ and $\xi$. Its derivative with respect to $x$ is $\xi$,
while its derivative with respect to $\xi$ is $x-x_{0}$.

On going from $\mathbb{R}^n$ to compact manifolds, we look for a real-valued function 
\begin{align*}
	l \, :  \, T^*M \times M & \to \mathbb{R} \\
	(v,x) & \mapsto l(v,x),
\end{align*}
which generalizes the function (\ref{(16)}). Linearity in $\xi$ becomes linearity in $v$ on each fiber 
of the cotangent bundle of $M$, but linearity in
$x$ has no obvious counterpart. However, if there exists a connection $\nabla$ on $T^{*}M$, linearity
at $x_{0}$ can be defined by stating that, for all integer $k \geq 2$, the symmetrized $k$-th covariant 
derivative vanishes at $x_{0}$. The desired linear function is then a real-valued function 
$l \in C^{\infty}(T^{*}M \times M)$ such that the image $l(v,x)$ is, for fixed $x$, linear in each
fiber of $T^{*}M$, and such that, for each $v \in T^{*}M$,
\begin{equation}
\partial^{k}l(v,x)|_{x=\pi(v)}=
\begin{cases}
	v \quad \text{ if } k =1 \\
	0 \quad \text{ otherwise}.
\end{cases}
\label{(17)}
\end{equation}
With this notation, $\partial^{k}$ is the symmetrized $k$-th covariant derivative with respect to $x$,
while $\pi$ is the projection map $\pi: T^{*}M \rightarrow M$. 

Once a connection $\nabla$ is assigned, a definition of symbol of a pseudodifferential operator $A$
on $C^{\infty}(M)$ is provided by \cite{ref:widom}
\begin{equation}
\sigma_{A}(v)= \left. \Bigr[A {\rm e}^{{\rm i}l(v,x)}\Bigr] \right|_{x=\pi(v)},
\label{(18)}
\end{equation}
which is a generalization of formula (\ref{(13)}). However, the function $l(v,x)$, whose existence will be proved 
in the next section, is not uniquely determined by the linearity conditions above. Therefore, different functions 
$l$ would lead to different symbol maps. On the other hand, it can be proved \cite{ref:widom} that the difference 
between any two symbol maps corresponding to different choices of functions $l$ belongs to a certain class of functions, 
therefore the symbol of a pseudodifferential operator will be actually defined as an element 
of a quotient space suitably defined.

\section{Existence theorem for the function $l$}
\label{sec:l}

Following our main source \cite{ref:widom}, we are now going to prove that there exists a function
$l \in C^{\infty}(T^{*}M \times M)$ with the properties listed above. Indeed, we have the following result. 

\begin{prop}[\cite{ref:widom}]
There exists a function $l \in C^{\infty}(T^{*}M \times M)$ such that $l(\cdot,x)$ is, 
for each $x \in M$, linear on the fibers
of $T^{*}M$ and such that, for each cotangent vector $v \in T^{*}M$, eq. (\ref{(17)}) holds.
\end{prop}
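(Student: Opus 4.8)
The plan is to manufacture $l$ from the affine structure that the connection already provides, so that linearity in the fibre and the infinite list of conditions (\ref{(17)}) become transparent. Fix $v\in T^{*}M$ and write $x_{0}=\pi(v)$. The connection $\nabla$ determines, through its geodesics, an exponential map $\exp_{x_{0}}\colon T_{x_{0}}M\to M$ which is a diffeomorphism from a star-shaped neighbourhood of $0$ onto a neighbourhood of $x_{0}$; let $\exp_{x_{0}}^{-1}$ denote the inverse, assigning to each nearby $x$ the tangent vector at $x_{0}$ whose geodesic reaches $x$ at time one. Since $v$ is a linear functional on $T_{x_{0}}M$, I would set, for $x$ close to $x_{0}$,
\[
l(v,x)=\bigl\langle v,\exp_{\pi(v)}^{-1}(x)\bigr\rangle .
\]
This is smooth in $(v,x)$ because geodesics depend smoothly on their base point and initial velocity, and it is linear along each fibre because, for $x$ fixed, the tangent vector $\exp_{x_{0}}^{-1}(x)$ is independent of $v$ and the pairing is linear in $v$.

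The conditions (\ref{(17)}) would then follow from a single geodesic computation. For $w\in T_{x_{0}}M$ the curve $\gamma_{w}(t)=\exp_{x_{0}}(tw)$ is a geodesic with $\gamma_{w}(0)=x_{0}$ and $\dot\gamma_{w}(0)=w$, and by construction $\exp_{x_{0}}^{-1}(\gamma_{w}(t))=tw$, so that
\[
l\bigl(v,\gamma_{w}(t)\bigr)=t\,\langle v,w\rangle .
\]
I would then invoke the standard fact that, because $\gamma_{w}$ is a geodesic, the symmetrized covariant derivatives of a scalar are exactly its Taylor coefficients along $\gamma_{w}$; that is, $\partial^{k}l(v,x)|_{x=x_{0}}$ evaluated on $w\otimes\cdots\otimes w$ equals $\frac{d^{k}}{dt^{k}}l(v,\gamma_{w}(t))|_{t=0}$. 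Reading off this derivative from the displayed formula gives $\langle v,w\rangle$ for $k=1$ and $0$ for every $k\ge 2$; since a symmetric tensor is determined by its values on the diagonal $w\otimes\cdots\otimes w$, polarization yields $\partial^{1}l|_{x=x_{0}}=v$ and $\partial^{k}l|_{x=x_{0}}=0$ for $k\ge2$, which is exactly (\ref{(17)}).

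Two points need care, and they are where I expect the work to concentrate. The first is the geodesic--Taylor identity used above: it must be proved by induction on $k$, the geodesic equation $\nabla_{\dot\gamma}\dot\gamma=0$ being used at each step to discard the terms that would otherwise spoil the symmetrization; this is the technical heart of the argument. The second is globalization, since $\exp_{x_{0}}^{-1}(x)$ is only defined near the diagonal of $M\times M$. Here the subtlety is that a naive cutoff depending on $v$ would destroy fibre-linearity, so I would instead multiply by a smooth cutoff $\chi(\pi(v),x)$ depending only on the two base points, equal to $1$ on a neighbourhood of the diagonal and supported within the injectivity radius (which is bounded below by compactness of $M$). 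Because every condition in (\ref{(17)}) is evaluated at $x=\pi(v)$, where $\chi\equiv1$ together with all its $x$-derivatives, the cutoff extends $l$ to a global element of $C^{\infty}(T^{*}M\times M)$ without disturbing either the fibre-linearity or (\ref{(17)}).
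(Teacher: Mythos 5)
Your proof is correct, but it takes a genuinely different route from the paper's. The paper works in local coordinates: using the expression (\ref{(20)}) of symmetrized covariant derivatives as coordinate partial derivatives plus Christoffel-type corrections, it observes that the conditions (\ref{(17)}) recursively prescribe every partial derivative of $l(v,\cdot)$ at $\pi(v)$ once the zeroth and first derivatives are fixed as in (\ref{(21)}); it then invokes Borel's theorem to produce a local $l_{i}$ with those prescribed jets (chosen linear and smooth in $v$), and finally glues the local pieces with a partition of unity, $l(v,x)=\sum_{i}\varphi_{i}(\pi(v))l_{i}(v,x)\psi_{i}(x)$. You instead exhibit an explicit representative, $l(v,x)=\langle v,\exp_{\pi(v)}^{-1}(x)\rangle$, cut off near the diagonal, and verify (\ref{(17)}) via the geodesic Taylor identity; your remark that the cutoff must depend only on the base points $(\pi(v),x)$, so as not to destroy fibre-linearity, is exactly the right precaution, and the identity you rely on does follow by the induction you sketch (the Leibniz rule along the curve plus $\nabla_{\dot\gamma}\dot\gamma=0$ gives $\frac{d^{k}}{dt^{k}}f(\gamma_{w}(t))\big|_{t=0}=(\nabla^{k}f)(x_{0})\cdot w^{\otimes k}$, and contraction of the full derivative with $w^{\otimes k}$ equals contraction of its symmetrization). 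What each approach buys: the paper's argument makes the non-uniqueness of $l$ manifest --- anything with the correct jets along the diagonal works, which is precisely why the symbol must later be defined as an equivalence class --- at the cost of appealing to the abstract Borel theorem; yours replaces Borel's theorem by a standard geodesic computation and produces a canonical, formula-based choice. One point you should make explicit: the connection in the paper is allowed to have torsion (cf. (\ref{(28)})), and your argument survives this, because both the geodesic equation and the contraction against $w^{\otimes k}$ see only the symmetric part of the connection coefficients; stating this would close the only loose end in your write-up.
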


\begin{proof}
The desired function $l$ is first constructed locally. If $U$ is a coordinate neighbourhood in $M$
with local coordinates $x^{i}$ and an $m$-th order covariant tensor
$\tau_{i_{1} \dots i_{m}}$, its covariant derivative is given by
\begin{equation}
\tau_{i_{1} \dots i_{m};i} = \frac{\partial}{\partial x^{i}}  \tau_{i_{1} \dots i_{m}}
-\sum_{\nu}\Gamma_{\; i i_{\nu}}^{j}
\; \tau_{i_{1} \dots i_{\nu -1}j i_{\nu+1} \dots i_{m}},
\label{(19)}
\end{equation}
where $\Gamma_{\; ik}^{j}$ is the standard notation for Christoffel symbols. Thus by induction, for any
scalar function $f$, one has 
\begin{equation}
f_{;i_{1} \dots ;i_{k}}= \frac{\partial^{k}f}{\partial x^{i_{1}} \cdots  \partial x^{i_{k}}} 
+\sum_{j: |j| <k}\gamma_{i_{1} \dots i_{k}j} \frac{\partial^{j}f}{\partial x^{j}} ,
\label{(20)}
\end{equation}
where $j$ are multiindices and $\gamma$'s are polynomials in the derivatives of Christoffel symbols.
This implies that, for $k>1$, condition (\ref{(17)}) is equivalent to each term
$$
\frac{\partial^{k}l}{\partial x^{i_{1}} \dots \partial x^{i_{k}}}
$$
being equal, at $\pi(v)$, to some linear combination of lower-order derivatives. To sum up, starting
with the requirements
\begin{equation}
l(v,\pi(v))=0, \qquad 
\left .  \frac{\partial l(v,x)}{\partial x^{i}} \right |_{x=\pi(v)}
=v \left( \frac{\partial}{\partial x^{i}}  \right),
\label{(21)}
\end{equation}
specifies what $\left . \frac{\partial^{k}l}{\partial x^{k}} \right|_{x = \pi(v)}$ must be, in order eq.
(\ref{(17)}) to hold. Borel's theorem ensures that there exists a $C^{\infty}$ function having partial derivatives 
arbitrarily prescribed, and the proof shows that $l$ may be chosen to be both linear and $C^{\infty}$ 
in $v$ \cite{ref:widom}.

Having established that, for each coordinate neighbourhood $U_{i}$ in $M$, there exists an
$l_{i} \in C^{\infty}(T^{*}U_{i} \times U_{i})$ with the desired properties, we can take finitely many
$U_{i}$ covering $M$ with a partition of unity given by functions 
$\varphi_{i} \in C_{0}^{\infty}(U_{i})$, and yet other smooth functions with compact support 
$\psi_{i} \in C_{0}^{\infty}(U_{i})$ equal to $1$ on a neighbourhood of the support of $\varphi_{i}$.
The function
\begin{equation}
l(v,x) \equiv \sum_{i}\varphi_{i}(\pi(v))l_{i}(v,x)\psi_{i}(x)
\label{(22)}
\end{equation}
is then globally defined and satisfies all requirements.
\end{proof}

\section{Analogue of Taylor's theorem}

For a smooth function $f$ on $M$, $f^{(k)}(x_{0})$ is replaced by $\nabla^{k}f(x_{0})$. To obtain the 
analogue of $x-x_{0}$ note that, since for fixed $x_{0}$ and $x \in M$ the function $l(v,x)$ is linear 
for $v \in T_{x_{0}}^{*}$, we may think of it as an element of the tangent space at $x_{0}$. We can
instead regard $l(\cdot,x)$ as a vector field on $M$, so that
\begin{equation}
l(\cdot,x)^{k}=l(\cdot,x)\otimes  \dots  \otimes l(\cdot,x)
\label{(23)}
\end{equation}
is a symmetric $k$-th order contravariant tensor field, and hence
$$
\nabla^{k}f(x_{0}) \cdot l(x_{0},x)^{k}
$$
is defined and, by virtue of symmetry of $l(x_{0},x)^{k}$, it coincides with
$\partial^{k}f(x_{0}) \cdot l(x_{0},x)^{k}$. A basic theorem \cite{ref:widom}
holds according to which, for each point
$x_{0} \in M$ and each integer $N$, one has
\begin{equation}
\partial^k f(x_0) = \left. \partial^k \sum_{n=0}^{N}\frac{1}{n!} \nabla^{n} f(x_{0}) \cdot l(x_{0},x)^{n} \right|_{x=x_0},
\label{(24)}
\end{equation}
for $k \leq N$.
 
\section{Torsion and curvature terms in the symbolic calculus}

In the symbolic calculus, one encounters frequently the unsymmetrized covariant derivatives
\begin{equation}
\nabla^{k}l(v) \equiv \left. \nabla^{k}l(v,x) \right|_{x=\pi(v)},
\label{(25)}
\end{equation}
which turn out to be polynomials in the torsion tensor $T_{\; ij}^{p}$ and curvature tensor
$R_{\; ijk}^{p}$. Indeed, by virtue of the Ricci identity, the difference of second covariant 
derivatives of an $m$-th order covariant tensor $\tau_{i_{1} \dots i_{m}}$ is given by
\begin{equation}
\tau_{i_{1} \dots i_{m};j;k}-\tau_{i_{1} \dots i_{m};k;j}
=\sum_{\nu}\tau_{i_{1} \dots i_{\nu-1}p i_{\nu+1} \dots i_{m}}
\; R_{\; i_\nu jk}^{p} -\tau_{i_{1} \dots i_{m};p}T_{\; jk}^{p}.
\label{(26)}
\end{equation}
Thus, for any permutation $\alpha$ of $1,\dots,k$, the difference
$$
f_{;i_{1} \dots ;i_{k}}-f_{;i_{{\alpha}(1)};\dots;i_{{\alpha}(k)}}
$$
is a sum of terms each of which is a product of $(k-1)$st or lower-order covariant derivatives of $f$
and covariant derivatives of $T$ and $R$ followed by contraction. In particular, if $f=l$, one has at
$x=\pi(v)$
\begin{equation}
\sum_{\alpha}l_{;i_{\alpha(1)}\dots;i_{\alpha(k)}}=0.
\label{(27)}
\end{equation}
In particular, one finds, using the Einstein summation convention,
\begin{equation}
(\nabla^{2}l)_{ij}=l_{;ij}={1 \over 2}v_{p}T_{\; ij}^{p},
\label{(28)}
\end{equation}
and, in the case of vanishing torsion,
\begin{equation}
(\nabla^{3}l)_{ijk}=l_{;ijk}={1 \over 3}v_{p}\Bigr(R_{\; ijk}^{p}+R_{\; jik}^{p}\Bigr).
\label{(29)}
\end{equation}

\section{The derivatives $D^{k}$ and $\nabla^{k}$}

Given a function $\sigma \in C^{\infty}(T^{*}M)$, one defines $D^{k}\sigma$ to be the $k$-th derivative
of $\sigma$ in the direction of fibers of $T^{*}M$. Thus, for $\pi(v)=x_{0}$, think of $\sigma$ as a
function on the cotangent space at $x_{0}$, and take its $k$-th derivative $D^{k}\sigma$ evaluated at $v$.
This is a $k$-linear function on $T^{*}x_{0}$ and may be identified with an element of the tensor product
$\otimes_{k}T_{x_{0}}$. This means that $D^{k}\sigma$ is a contravariant $k$-tensor, and it is the analogue
of ${\partial^{k}\sigma \over \partial \xi^{k}}$, but of course $k$ is an integer in $D^{k}$ and a multi-index
in ${\partial^{k}\over \partial \xi^{k}}$.

The covariant derivatives $\nabla^{k}$ act on $C^{\infty}(M)$, and to define their action on 
$\sigma \in C^{\infty}(T^{*}M)$ we set \cite{ref:widom}
\begin{equation}
\nabla^{k}\sigma(v) \equiv \left . \nabla^{k}\sigma \Bigr({\rm d}_{x}l(v,x)\Bigr)\right|_{x=\pi(v)}.
\label{(30)}
\end{equation}
Although the function $l(v,\cdot)$ is not unique as we said before, all its derivatives are determined at
$\pi(v)$, and hence (\ref{(30)}) defines $\nabla^{k}\sigma$ unambiguously as a covariant $k$-tensor. The mixed derivatives
$\nabla^{k}D^{j}\sigma$ may also occur and are defined by
\begin{equation}
\nabla^{k}D^{j}\sigma(v) \equiv \nabla_{x}^{k}D_{v}^{j}
\left . \sigma({\rm d}_{x}l(v,x))\right|_{x=\pi(v)}.
\label{(31)}
\end{equation}
This is a contravariant (resp. covariant) $j$-tensor (resp. $k$-tensor), or tensor of type $(j,k)$. 
For example, given a Riemannian manifold $(M,g)$, if $\sigma$ is the squared norm of $v$, i.e.
\begin{equation}
\sigma(v)=|v|^{2}=g^{ij}v_{i}v_{j},
\label{(32)}
\end{equation}
one has 
\begin{equation}
(D \sigma)^{i}=2g^{ij}v_{j}, \; (D^{2}\sigma)^{ij}=2g^{ij}.
\label{(33)}
\end{equation}
Moreover, since
\begin{equation}
\sigma({\rm d}l)=g^{ij}l_{i}l_{j},
\label{(34)}
\end{equation}
and both $l_{i}$ and $g^{ij}$ have vanishing covariant derivatives at $\pi(v)$, one finds
\begin{equation}
\nabla \sigma=0,
\label{(35)}
\end{equation}
while, by virtue of (\ref{(29)}),
\begin{equation}
(\nabla^{2}\sigma)_{kl}=\frac{2}{3} v_{p}v_{q}
\Bigr(R_{\; \; \; kl}^{pq}
+R_{\; k \; \; l}^{p \; \; q}\Bigr),
\label{(36)}
\end{equation}
and
\begin{equation}
(\nabla^{2}D \sigma)_{\; kl}^{p}={4 \over 3}v_{q}
\Bigr(R_{\; \; \; kl}^{pq}+R_{\; k \; \; l}^{p \; \; q}\Bigr).
\label{(37)}
\end{equation}
Note also that $\sigma(v)=|v|^{2}$ is the symbol of the Laplacian on a Riemannian manifold, if the
connection with respect to which the symbol is taken is the Levi-Civita connection.

\section{The symbol as an equivalence class}

Let $S_{\rho}^{\omega}(\mathbb{R}^n\times \mathbb{R}^{m})$, for $1/2 < \rho \leq 1$, be the space of 
functions $\sigma \in C^{\infty}(\mathbb{R}^n \times \mathbb{R}^{m})$ such that, for all
multi-indices $j$ and $k$, 
\begin{equation}
{\partial^{j}\over \partial x^{j}}{\partial^{k}\over \partial \xi^{k}}
\sigma(x,\xi) = O \left( (1+|\xi|)^{\omega-\rho|k|+(1-\rho)|j|} \right),
\label{(38)}
\end{equation}
uniformly on compact $x$-sets.

On going from $\mathbb{R}^n$ to a (compact) manifold $M$, the spaces
$S_{\rho}^{\omega}(M \times \mathbb{R}^{m})$ and $S_{\rho}^{\omega}(T^{*}M)$ consist of functions
satisfying (\ref{(38)}) in terms of local coordinates. If it is sufficiently clear what the underlying space is,
one writes simply $S_{\rho}^{\omega}$, and one sets
\begin{equation}
S_{\rho}^{\infty} \equiv \bigcup_{\omega \in \mathbb{R}} S_{\rho}^{\omega}, \qquad
S^{-\infty} \equiv \bigcap_{\omega \in \mathbb{R}} S_{\rho}^{\omega},
\label{(39)}
\end{equation}
where $S^{-\infty}$ is independent of $\rho$. 

Consider now a manifold $M$ endowed with a connection. We denote by $L_{\rho}^{\omega}(M)$ the space of operators on 
$C^{\infty}(M)$ which locally are pseudodifferential operators with symbols (as defined in (\ref{(18)})) in 
$S_{\rho}^{\omega}$. Given a linear function $l(v,x)$ as defined in section \ref{sec:l}, let 
$\psi \in C^{\infty}(M \times M)$ such that it is 1 on a neighbourhood of the diagonal and such that 
${\rm d}_{x}l(v,x) \neq 0$ for $\psi(x_{0},x) \neq 0$ and  $0 \neq v \in T_{x_{0}}^{*}$

Then, for any operator $A \in L_{\rho}^{\omega}$, define
\begin{equation}
\sigma_{A}(v)= \left . \Bigr[A \psi(\pi(v),x) {\rm e}^{{\rm i}l(v,x)}\Bigr]
\right|_{x=\pi(v)}.
\label{(41)}
\end{equation}
It can be proved \cite{ref:widom} that such a function belongs to $S_{\rho}^{\omega}$, and that different choices 
of functions $\psi$ and $l$ lead to the same function $\sigma_A$ modulo on element of $S^{-\infty}$. Therefore, 
the symbol $\sigma_A$ of the operator $A$ is defined as the corresponding equivalence class in $S_{\rho}^{\omega}/S^{-\infty}$. 

\section{Symbols in quantum field theory}

On the side of physical applications, let us here reconsider the photon propagator in the Euclidean
version of quantum electrodynamics. In modern language, the functional integral tells us that the photon 
propagator is obtained by first evaluating the gauge-field operator $P_{\mu \nu}$, $\mu,\nu = 0,\dots,3$, resulting from the
particular choice of gauge-averaging functional, then taking its symbol $\sigma(P_{\mu \nu})$ and inverting
such a symbol to find $\sigma^{-1}(P_{\mu \nu})=\Sigma^{\mu \nu}$ for which 
$\sigma \Sigma=\Sigma \sigma=I$. The photon propagator reads eventually \cite{ref:esp02}
\begin{equation}
\bigtriangleup^{\mu \nu}(x,y)=(2\pi)^{-4}\int_{\zeta}{\rm d}^{4}k \; \Sigma^{\mu \nu}
{\rm e}^{{\rm i}k \cdot (x-y)}
\label{(43)}
\end{equation}
for some contour $\zeta$. The gauge-field Lagrangian turns out to be
\begin{equation}
{\cal L}=\partial^{\mu}\rho_{\mu}+{1 \over 2}A^{\mu}P_{\mu \nu}A^{\nu},
\label{(44)}
\end{equation}
where 
\begin{equation}
\rho_{\mu}={1 \over 2}A_{\nu}\left(\partial_{\mu}A^{\nu}-\partial^{\nu}A_{\mu}\right)
+{1 \over 2 \alpha}A_{\mu}\partial^{\nu}A_{\nu},
\label{(45)}
\end{equation}
and
\begin{equation}
P_{\mu \nu}=-g_{\mu \nu} \Box + \left(1-{1 \over \alpha}\right)\partial_{\mu}\partial_{\nu}.
\label{(46)}
\end{equation}
Note that here we denoted by $\partial_{\mu}$ the standard partial derivative with respect to the $\mu$ component, 
that is $\partial_\mu = \frac{\partial}{\partial x^\mu}$. Also, $\alpha \in \mathbb{R}\setminus\{0\}$, $g_{\mu\nu} 
= \text{diag}(1,1,1,1)$ and $\Box = g^{\mu\nu} \partial_\mu \partial_\nu$. Of course, the term $\rho_{\mu}$ only 
contributes to a total divergence and hence does not
affect the photon propagator, while the parameter $\alpha$ can be set equal to $1$ (Feynman choice)
so that calculations are simplified. Thus, we can eventually obtain the gauge-field operator
\begin{equation}
P_{\mu \nu}(\alpha=1)=-g_{\mu \nu} \Box .
\label{(47)}
\end{equation}
Its symbol, which results from Fourier analysis of the $\Box$ operator, reads as
\begin{equation}
\sigma(P_{\mu \nu}(\alpha=1))=k^{2}g_{\mu \nu},
\label{(48)}
\end{equation}
and hence the Euclidean photon propagator is (cf. eq. (\ref{(4)}))
\begin{equation}
\bigtriangleup_{E}^{\mu \nu}(x,y)=(2\pi)^{-4}\int_{\Gamma}
{\rm d}^{4}k \; {g^{\mu \nu}\over k^{2}}
{\rm e}^{{\rm i}k \cdot (x-y)},
\label{(49)}
\end{equation}
where the points $x$ and $y$ refer to the indices $\mu$ and $\nu$, respectively. Note that integration
along the real axis for $k_{0},k_{1},k_{2},k_{3}$ avoids poles of the integrand, which are located at the
complex points for which $k^{2}=\sum_{\mu=0}^{3}(k_{\mu})^{2}=0$. 

Strictly speaking, the gauge parameter $\alpha$ in (\ref{(45)}) and (\ref{(46)}) is the bare value $\alpha_{B}$ of
$\alpha$ before renormalization, and we should express the bare symbol of the gauge-field operator in
QED in the form
\begin{equation}
\sigma(P_{\mu \nu})=k^{2}g_{\mu \nu}+\left({1 \over \alpha_{B}}-1 \right)k_{\mu}k_{\nu}
=\sigma_{\mu \nu}(k).
\label{(50)}
\end{equation}
Its inverse $\Sigma^{\mu \nu}$ is a combination of $g^{\mu \nu}$ and $k^{\mu}k^{\nu}$ with
coefficients $\cal A$ and $\cal B$, respectively, determined from the condition 
\begin{equation}
\sigma_{\mu \nu}\Sigma^{\nu \lambda}=\delta_{\mu}^{\; \lambda},
\label{(51)}
\end{equation}
which implies
\begin{equation}
{\cal A}={1 \over k^{2}}, \;
{\cal B}={(\alpha_{B}-1)\over k^{4}}.
\label{(52)}
\end{equation}
At this stage, the bare photon propagator takes the form
\begin{equation}
\bigtriangleup^{\mu \nu}(x,y)=\int_{\Gamma}{{\rm d}^{4}k \over (2\pi)^{4}}
\left[{g^{\mu \nu}\over k^{2}}+{(\alpha_{B}-1)k^{\mu}k^{\nu}\over k^{4}}
\right]{\rm e}^{{\rm i} k \cdot (x-y)}.
\label{(53)}
\end{equation}
 
\section{Hawking radiation}

Following Ref. \cite{ref:par} we now consider a completely different setting, i.e. the spacetime
of a collapsing star, for which, outside the horizon, the wave operator appears as it would in a
Schwarzschild spacetime for a suitable choice of coordinates. Let $r$ be such a radial coordinate with
associated hypersurface $\Sigma$ having equation $r=0$, and let $t$ be Killing time, so that 
$k={\partial \over \partial t}$ is the timelike Killing vector field outside the Killing horizon
$B_{k}$. The wave operator reads as
\begin{equation}
\Box ={\partial^{2}\over \partial t^{2}}-{1 \over r^{2}}\left(1-{2m \over r}\right)
{\partial \over \partial r}\left(r^{2}\left(1-{2m \over r}\right)
{\partial \over \partial r}\right)
-{1 \over r^{2}}\left(1-{2m \over r}\right)L,
\label{(54)}
\end{equation}
where the operator $L$ is independent of $t$ or $r$ and takes the same form as in Minkowski 
spacetime. The radial part of the wave operator is
\begin{equation}
\Box_{R}={\partial^{2}\over \partial t^{2}}
-{1 \over r^{2}}\left(1-{2m \over r}\right)
{\partial \over \partial r}\left(r^{2}\left(1-{2m \over r}\right)
{\partial \over \partial r}\right).
\label{(55)}
\end{equation}
By letting $\gamma_{d}$ (here $d$ is for detector, since a calculation along $\gamma_{d}$
yields the spectrum as measured by an observer whose worldline is $\gamma_{d}$) 
be an integral curve of ${\partial \over \partial t}$ and defining $\tau \equiv \kappa t$,
the radial part of the wave operator can be decomposed in the form
\begin{equation}
\Box_{R}=(\kappa {\dot \gamma}_{d})^{2}-\bigtriangleup_{R}
=(\kappa {\dot \gamma}_{d})^{2}
-{1 \over r^{2}}\left(1-{2m \over r}\right)
{\partial \over \partial r}\left(r^{2}\left(1-{2m \over r}\right)
{\partial \over \partial r}\right),
\label{(56)}
\end{equation}
where the dot denotes ${\partial \over \partial \tau}$. 

Since the operator $\sqrt{\bigtriangleup_{R}}$ is on firm ground, the operator $\Box_{R}$ can be decomposed as
\begin{equation*}
\Box_{R} = \left( \kappa {\dot \gamma}_{d} - \sqrt{\bigtriangleup_{R}} \right)  
\left( \kappa {\dot \gamma}_{d} + \sqrt{\bigtriangleup_{R}} \right).
\end{equation*}
In the following, we will investigate the spectral properties of the operator $\kappa {\dot \gamma}_{d} 
- \sqrt{\bigtriangleup_{R}}$, corresponding to the outgoing part of the radiation. Let us start by considering the 
first-order pseudodifferential operator (our conventions for numerical factors follow here
our Ref. \cite{ref:par})
\begin{equation}
P \equiv \sqrt{\bigtriangleup_{R}}+{\rm i}{\kappa \over 2\pi}\xi,
\label{(57)}
\end{equation}
obtained by taking the Fourier transform with respect to the variable $\tau$.

In order to study the spectrum of $\sqrt{\bigtriangleup_{R}}$, we consider the eigenvalue equation
$\bigtriangleup_{R}u+\lambda^{2}u=0$, where $u$ depends only on $r$. Upon denoting by a prime the
differentiation with respect to $r$, this reads as
\begin{equation}
\left(r^{2}\left(1-{2m \over r}\right)u' \right)'
+\lambda^{2}r^{2}\left(1-{2m \over r}\right)^{-1}u=0.
\label{(58)}
\end{equation}
This ordinary differential equation implies that $u(2m)=0$ while $u'(2m)$ is finite. At large $r$,
eq. (\ref{(58)}) reduces to 
\begin{equation}
(r^{2}u')'+\lambda^{2}r^{2}u=0.
\label{(59)}
\end{equation}
The general solution of eq. (\ref{(59)}) that is bounded at infinity is \cite{ref:par} 
\begin{equation}
u(r)=a {\sin(\lambda r) \over r}+b {\cos (\lambda r)\over r},
\label{(60)}
\end{equation}
where the parameters $a$ and $b$ are constant. From the condition $u(2m)=0$ one finds
\begin{equation}
-{b \over a}=\tan (2m \lambda),
\label{(61)}
\end{equation}
and \cite{ref:par}
\begin{equation} 
\lambda_{n}={\theta \over 2m}+n {\pi \over 2m}, \; n \in Z
\label{(62)}
\end{equation}
is the large-$r$ limit of the spectrum of $\bigtriangleup_{R}$, having set
$\theta \equiv \arctan (-b/a)$. Since the action of the wave operator on smooth functions should be
smooth on the hypersurface $\Sigma$, one also requires boundedness of $u$ as $r \rightarrow 0^{+}$.
This implies in turn that $b=\theta=0$ and $\lambda_{n}$ reduces to $n$ upon
rescaling the radial variable $r$. 

The spectral $\zeta$-function for the operator $P$ defined in (\ref{(57)}) is therefore
expressed, in the large-$r$ limit, by the asymptotic expansion
\begin{equation}
{\widetilde \zeta}_{P}(s) = \sum_{n=1}^{\infty}\left(n+{\rm i}{\kappa \over 2 \pi}\xi \right)^{-s},
\label{(63)}
\end{equation}
where the tilde is used to denote removal of the degeneracy of the vanishing eigenvalue. By relabelling 
the lower limit of summation, we can re-express this spectral $\zeta$-function in terms of the
Hurwitz $\zeta$-function, i.e. \cite{ref:par}
\begin{equation}
{\widetilde \zeta}_{P}(s) \sim \sum_{l=0}^{\infty}\left(l+1+{\rm i}{\kappa \over 2 \pi}\xi \right)^{-s}
=\zeta_{H}\left(s,1+{\rm i}{\kappa \over 2\pi}\xi \right).
\label{(64)}
\end{equation}

If $\omega$ is the variable dual to $\xi$ in the framework of Fourier transform, one can write
that the Fourier transform with respect to $\xi$ of ${\widetilde \zeta}_{P}(s)$ is approximated
by \cite{ref:dit}
\begin{equation}
{\cal F}({\widetilde \zeta}_{P}(s)) \sim 
{\left({2 \pi \over \kappa}\right)^{s}\omega^{s-1}\over \Gamma(s) 
({\rm e}^{2 \pi \omega / \kappa}-1)}.
\label{(65)}
\end{equation}
If one takes the limit as $s \rightarrow 1$ and recalls, from Ref. \cite{ref:ppe}, that
$\kappa={1 \over 4m}$ in the spacetime of a collapsing star, one finds 
\begin{equation}
{\cal F}({\widetilde \zeta}_{P}(s)) \sim 
{8 \pi m \over ({\rm e}^{8 \pi m \omega}-1)}.
\label{(66)}
\end{equation}
The spectral density $\rho$ is now given by
\begin{equation}
\rho = \kappa {\cal F}({\widetilde \zeta}_{P}(s))
\sim {2 \pi \over ({\rm e}^{8 \pi m \omega}-1)},
\label{(67)}
\end{equation}
which is the famous result of Hawking in Ref. \cite{ref:haw}.

\section{Concluding remarks}

Pseudodifferential \cite{ref:gru} and Fourier-Maslov   
integral operators \cite{ref:tre} play a key role in the modern theory of elliptic
and hyperbolic equations on manifolds, respectively, and the physical applications form an 
equally rich family, ranging from the Cauchy problem of classical field theory
\cite{ref:esp15} to the Green functions of quantum field theory and black-hole physics,
as we have shown.

Here we would like to add that, in the sixties, DeWitt discovered that the advanced and
retarded Green functions of the wave operator on metric perturbations in the de Donder gauge 
make it possible to define classical Poisson brackets on the space of functionals that are
invariant under the action of the full diffeomorphism group of spacetime. He therefore tried
to exploit this property to define invariant commutators for the quantized gravitational
field \cite{ref:dw}, but the operator counterpart of the classical Poisson brackets turned out
to be a hard task. On the other hand we know from section (\ref{sec:intro}) that, rather than inverting exactly 
a partial differential operator, it is more convenient to build a parametrix. This makes it possible
to solve inhomogeneous equations with the desired accuracy. Interestingly, it remains to be
seen whether such a construction might be exploited in canonical quantum gravity, provided one
understands what is the counterpart of classical smoothing operators in the quantization procedure.  

\section*{Acknowledgments}
G. Esposito is grateful to the Dipartimento di Fisica Ettore Pancini of Federico II University for
hospitality and support. G. M. Napolitano is grateful to the Department of Physics Ettore Pancini, University of
Naples Federico II, for the hospitality during the visit when this work was initiated.
The authors dedicate the present review paper to Gaetano Vilasi, who has done an 
invaluable work for several generations of students and researchers at Salerno University.

\begin{appendix}

\section{Symbol map for partial differential operators; the space of pseudodifferential
operators on $\mathbb{R}^n$}

The definitions of Section $1$ are best suited to deal with the analysis of inhomogeneous partial
differential equations $Pu=f$ and use a nomenclature very close to the one appropriate for
Fourier-Maslov integral operators for hyperbolic equations. However, we should also recall the basic
properties summarized below \cite{ref:gil}.

A linear partial differential operator $P$ of order $d$ on $\mathbb{R}^n$ is a polynomial expression
\begin{equation}
P(x,D)=\sum_{\alpha: |\alpha| \leq d}a_{\alpha}(x)D_{x}^{\alpha}, \; 
a_{\alpha} \in C^{\infty}(\mathbb{R}^n),
\label{(A1)}
\end{equation}
where, for the multi-index $\alpha=(\alpha_{1},\dots,\alpha_{n})$, the modulus $|\alpha|$
and the derivative operator $D_{x}^{\alpha}$ are defined by
\begin{equation}
|\alpha| \equiv \alpha_{1}+\cdots+\alpha_{n}, \;
D_{x}^{\alpha} \equiv (-{\rm i})^{|\alpha|}
\left({\frac{\partial}{\partial x_{1}}}\right)^{\alpha_{1}} \cdots
\left({\frac{\partial}{\partial x_{n}}}\right)^{\alpha_{n}}.
\label{(A2)}
\end{equation}
The symbol of $P$ is then defined by
\begin{equation}
\sigma(P)=\sigma(x,\xi) \equiv \sum_{\alpha: |\alpha| \leq d}a_{\alpha}(x)\xi^{\alpha},
\label{(A3)}
\end{equation}
and is a polynomial of order $d$ in the dual variable $\xi$, where $(x,\xi)$ defines a point of
the cotangent bundle of $\mathbb{R}^n$. The {\it leading symbol} 
is the highest order part of $\sigma(x,\xi)$, i.e.
\begin{equation}
\sigma_{L}(P)=\sigma_{d}(x,\xi) \equiv \sum_{\alpha: |\alpha|=d}a_{\alpha}(x)\xi^{\alpha},
\label{(A4)}
\end{equation}
and the action of $P$ can be re-expressed in integral form as
\begin{equation}
Pf(x)=(2\pi)^{-{\frac{n}{2}}} \int_{\mathbb{R}^n}
{\rm e}^{{\rm i}\xi \cdot x} \sigma(x,\xi){\hat f}(\xi){\rm d}\xi.
\label{(A5)}
\end{equation}

In general, one can consider the set $S^{d}$ of all symbols $\sigma(x,\xi)$ such that
\vskip 0.3cm
\noindent
(i) $\sigma$ is smooth in $(x,\xi)$ with compact $x$ support.
\vskip 0.3cm
\noindent
(ii) For all multi-indices $(\alpha,\beta)$, there exist constants $C_{\alpha,\beta}$ for which
\begin{equation}
\left | D_{x}^{\alpha}D_{\xi}^{\beta} \sigma(x,\xi) \right| \leq C_{\alpha,\beta}
(1+|\xi|)^{d-|\beta|}.
\label{(A6)}
\end{equation}

For $\sigma \in S^{d}$, one defines the associated operator (${\cal S}$ being the Schwartz space
of smooth complex-valued functions with fast decrease) 
$P: {\cal S} \rightarrow C_{0}^{\infty}(\mathbb{R}^n)$ as in (A5), i.e.
\begin{equation}
Pf(x) \equiv (2\pi)^{-{\frac{n}{2}}} \int_{\mathbb{R}^n}
{\rm e}^{{\rm i}\xi \cdot x}\sigma(x,\xi){\hat f}(\xi){\rm d}\xi
=(2\pi)^{-n}\int_{\mathbb{R}^n}
{\rm e}^{{\rm i}\xi \cdot (x-y)}\sigma(x,\xi)f(y){\rm d}y \; {\rm d}\xi.
\label{(A7)}
\end{equation}
The space $\psi^{d}$ of such operators is the set of pseudo-differential operators of order $d$
\cite{ref:see}.    

\end{appendix}

\bibliographystyle{unsrt}

\end{document}